\documentclass[9pt,twocolumn]{extarticle}

\usepackage{soul}
\usepackage{titlesec} 
\usepackage[backend=bibtex,style=nature]{biblatex}
\DeclareNameAlias{sortname}{first-last}
\DeclareNameAlias{default}{first-last}
\setlength\bibitemsep{0.1\itemsep}

\bibliography{biblio} 
\usepackage{dsfont}
\usepackage{graphicx}
\usepackage{subcaption}
\usepackage[margin=0.9in]{geometry}
\usepackage[usenames,dvipsnames]{color}
\usepackage[colorlinks,linkcolor=Blue,urlcolor=Blue,citecolor=Blue]{hyperref}

\usepackage{amsmath,amssymb,braket}

\usepackage{setspace}

\usepackage{amsfonts}
\usepackage{bbm}
\usepackage[squaren]{SIunits}
\usepackage{cuted} 
\usepackage{array}
\newcolumntype{P}[1]{>{\arraybackslash}p{#1}}
\newcolumntype{Q}[1]{>{\centering\arraybackslash}p{#1}}

\usepackage{diagbox}

\usepackage{mathpazo}

\usepackage{courier}
\normalfont

\usepackage[T1]{fontenc}
\usepackage{caption}  
\captionsetup{labelformat=simple, labelsep=space}
\renewcommand{\thefigure}{\arabic{figure}}

\captionsetup{labelfont={bf,sf,small},margin=1mm}

\usepackage{upgreek}

\newcommand{\ad}[1]{\textsuperscript{#1}\kern-2pt}

\widowpenalty=1000
\clubpenalty=1000

\usepackage{amsmath,amsthm,mathtools}

\makeatletter
\def\blx@maxline{77}
\makeatother

\usepackage{indentfirst}
 \setlength{\parindent}{2em}

\usepackage{capt-of}

 \usepackage{flushend}

\usepackage[ruled]{algorithm2e} 
\usepackage{newfloat,algcompatible} 
\widowpenalty=10000
\clubpenalty=10000
 
\setlength{\columnsep}{6mm}
\addtolength{\textheight}{40pt}
\addtolength{\topmargin}{-20pt}
\setlength{\parskip}{0pt}
\setlength\parindent{12pt}
\addtolength{\textwidth}{70pt}
\setlength{\oddsidemargin}{-40pt}
\setlength{\evensidemargin}{-40pt}
\setlength{\skip\footins}{0.8cm}
\setlength{\footnotesep}{0.5cm}


\usepackage{tikz}

\widowpenalty=1000
\clubpenalty=1000

\def\mytitle{
Simulating Hamiltonian dynamics in a programmable photonic quantum processor using linear combinations of unitary operations\vspace{-4mm}}   

\title{\vspace{-1cm}\huge\textbf{\textrm{\mytitle}}}  

\author{Yue Yu$^{1,\star}$, Yulin Chi$^{1,\star}$, Chonghao Zhai$^{1}$, Jieshan Huang$^{1}$,  Qihuang Gong$^{1,2,3,4,5}$,  Jianwei Wang$^{1,2,3,4,5,\dagger}$}

\date{} 
\begin{document}
\twocolumn[{
\maketitle 
\vspace{-9mm}
\begin{center}
\begin{minipage}{1\textwidth}
\begin{center}
\textit{\textrm{
\textsuperscript{1} State Key Laboratory for Mesoscopic Physics, School of Physics, Peking University, Beijing, 100871, China
\\\textsuperscript{2} Frontiers Science Center for Nano-optoelectronics \& Collaborative Innovation Center of Quantum Matter, Peking University, Beijing, 100871, China 
\\\textsuperscript{3} Collaborative Innovation Center of Extreme Optics, Shanxi University, Taiyuan 030006, Shanxi, China
\\\textsuperscript{4} Peking University Yangtze Delta Institute of Optoelectronics, Nantong 226010, Jiangsu, China.
\\\textsuperscript{5} Hefei National Laboratory, Hefei 230088, China\\
{$\star$} These authors contributed equally to this work. ~~~{$\dagger$}  emails to:  jww@pku.edu.cn
  }}
\end{center}
\end{minipage}
\end{center}

\setlength\parindent{12pt}
\begin{quotation}
\noindent 
{Simulating the dynamic evolutions of physical  and molecular systems in a quantum computer is of fundamental interest in many applications~\supercite{RevModPhys.86.153,mcardle2020quantum}. Its implementation requires efficient quantum   simulation algorithms~\supercite{trotter1959product,suzuki1976generalized,lloyd1996universal,aspuru2005simulated,berry2015simulating, berry2017exponential,low2019Hamiltonian,low2017optimal,campbell2019random,wiebe2011simulating}. 
The Lie-Trotter-Suzuki approximation algorithm, also well known as the Trotterization, is a basic algorithm in quantum dynamic simulation\supercite{trotter1959product,suzuki1976generalized,lloyd1996universal}. A multi-product algorithm that is a linear combination  of multiple Trotterizations  has  been proposed  to improve the approximation accuracy~\supercite{childs2012hamiltonian}. 
Implementing such multi-product Trotterization in quantum computers however remains experimentally challenging and its success probability is limited. 
Here, we modify the multi-product Trotterization  and combine  it with the oblivious amplitude amplification to simultaneously  reach a high simulation precision and high success probability. 
We experimentally implement the modified multi-product algorithm in an integrated-photonics programmable quantum simulator in silicon, which allows the initialization, manipulation and measurement of four-qubit states and a sequence of linearly combined controlled-unitary gates, 
to emulate  the dynamics of a coupled electron and nuclear spins  system. 
Theoretical and experimental results are in good agreement, and they both show the modified multi-product algorithm can simulate Hamiltonian dynamics with a higher precision  than conventional Trotterizations and a nearly deterministic success probability. 
We certificate the multi-product algorithm in a small-scale quantum simulator based on linear combinations of operations, and this work promises the practical  implementations of quantum dynamics simulations.}
\end{quotation}}]

\noindent Efficiently simulating the quantum dynamics of physical and molecular systems represents an  important near-term application of quantum computers~\supercite{RevModPhys.86.153,mcardle2020quantum}. 
Several quantum algorithms have been proposed for simulating quantum Hamiltonian dynamics, e.g., Lie-Trotter-Suzuki approximations~\supercite{trotter1959product,suzuki1976generalized,lloyd1996universal} (known as Trotterizations), quantum random walks~\supercite{berry2009black, childs2010relationship}, multi-product formula~\supercite{childs2012hamiltonian}, truncated Taylor series~\supercite{berry2015simulating, berry2017exponential},   qubitization~\supercite{low2019Hamiltonian} and quantum signal processing~\supercite{low2017optimal}. 
Among them, Trotterization is regarded as a basic algorithm in  dynamics simulations.  The  Trotterization accuracy  however appears to be orders of magnitude looser than its prediction error bounded by numerical simulations
~\supercite{babbush2015chemical,heyl2019quantum}. Many algorithms have been proposed to improve  the Trotterizations, e.g, product order randomization~\supercite{campbell2019random}, time-dependent Hamiltonian simulation~\supercite{wiebe2011simulating}, and truncated Taylor series~\supercite{berry2015simulating, berry2017exponential}.
The order of  Lie-Trotter-Suzuki approximation determines the error of Trotterization. For example, the first-order formula requires a $\mathcal{O}(t^2/\epsilon)$-step Trotterization with an error of  $\epsilon$. Expectably, utilizing  higher-order formula can result in a substantial improvement of precision at the expense of more quantum operations and higher circuit depth~\supercite{sornborger1999higher,berry2007efficient}. 
Another efficient multi-product algorithm was proposed by \textit{Childs} and \textit{Wiebe}~\supercite{childs2012hamiltonian}, which can reach the same precision as the high-order Trotterization but with a lower circuit depth.

The quantum hardware for implementing the multi-product algorithm is a linear combinations of unitaries (LCU)\supercite{childs2012hamiltonian}. Note the LCU enables a general form of quantum operations that can be used to implement Hamiltonian simulation\supercite{childs2012hamiltonian,berry2015simulating}, Harrow-Hassidim-Lloyd algorithm\supercite{wei2017realization}, passive quantum error correction\supercite{marshman2018passive} and simulation of the Yang-Baxter equation\supercite{zheng2018duality}.
The realization of LCU circuits in quantum devices requires the implementation of a sequence of multi-qubit controlled-unitary operations~\supercite{gui2006general}. 
Realizing such multi-qubit controlled-unitary operations, in which the unitary can be arbitrarily  controllable, is generally a challenging task, though multi-qubit gates have been reported in different quantum systems, e.g, superconducting qubits \supercite{Huang2020}, trapped ions \supercite{Monroe,} and photons \supercite{Geoff}. 
In photonic systems, controlled-unitary operations between qubits or qudits have been demonstrated
\supercite{Zhou2011,Wang:QHL,Santagati2018,chi2022programmable,Patel}. 
Silicon-photonics quantum technologies \supercite{Wangreview}, that can integrate entangled-photon sources \supercite{vigliar2021error}, programmable quantum circuits \supercite{Wang16D,qiang2018large}, and single-photon detectors \supercite{pernice2012high}, could provide a versatile platform for multi-qubit  LCU-based Hamiltonian dynamics simulation.

Here, we propose and demonstrate a modified multi-product algorithm, assisted by  the oblivious amplitude amplification (OAA).  The  modified multi-product algorithm allows a higher accuracy of dynamics simulation compared to  conventional Trotterizations as well as the original multi-product \supercite{childs2012hamiltonian}, 
and it also enables a nearly deterministic success probability. 
We experimentally implement our modified multi-product algorithms in a LCU quantum simulator on a programmable four-qubit  quantum photonic chip. 
To benchmark the processing of the modified multi-product algorithm, we simulate a general Rabi-type Hamiltonian of a coupled 
nuclear spin and electron spin system. 
To the best of our knowledge, this work reports the first multi-qubit controlled-unitary gate in integrated photonics, 
and we reprogram the device to implement the first Hamiltonian dynamics simulation in LCU circuits.

\begin{figure*}[ht!]
\centering 
\includegraphics[width=0.91\linewidth]{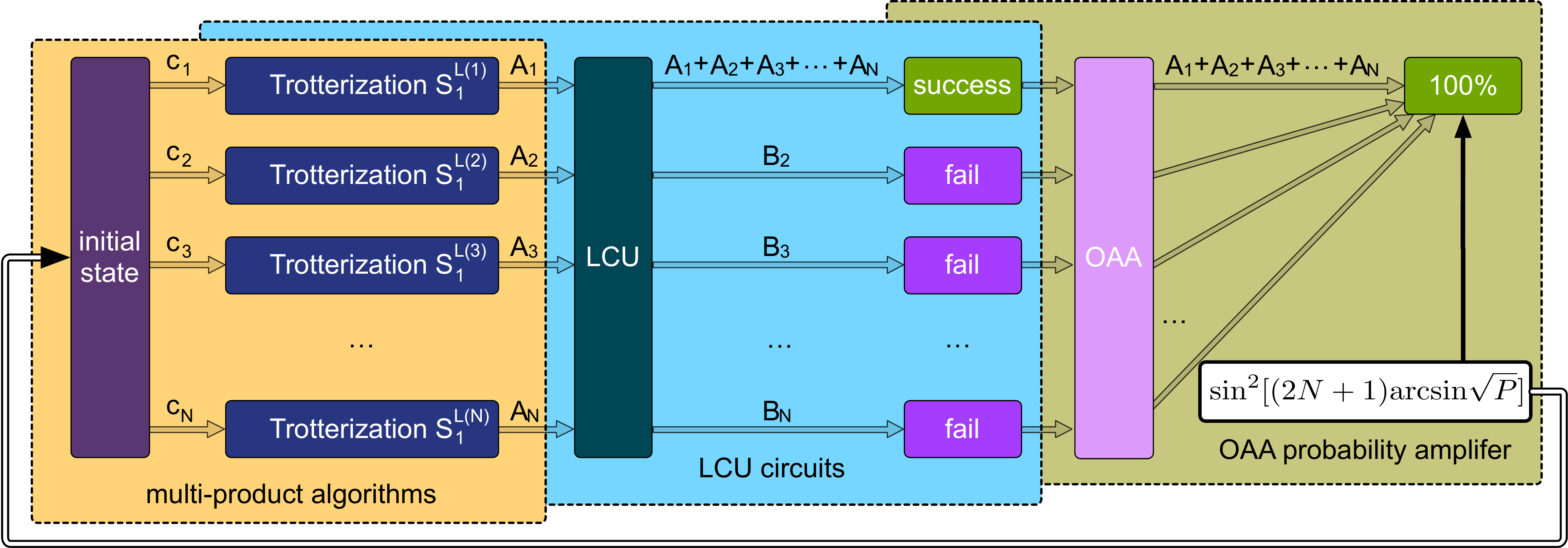}
\caption{\textbf{Architecture of quantum dynamics simulation using the modified multi-product algorithms with OAA in linearly-combined unitary circuits.} 
Multi-product algorithms (yellow boxed) can linearly combine multiple low-order Trotterizations ($\mathcal{S}_1^l$) to improve the simulation accuracy of quantum dynamics.
The LCU circuits (blue boxed) represent the quantum hardware that can implement the multi-product algorithms, and the LCU can be physically realized by a sequence of quantum controlled-unitary operations. 
There are many failing outcomes in the conventional LCU algorithms, resulting in a low probability of success. We adopt the OAA method (green boxed) to amplify the success probability of the algorithm.
This architecture allows enhanced multi-product algorithms  for quantum  simulation of dynamic evolution, with an improved accuracy and a high  success probability. By properly choosing the coefficients $ c_i$, iteration numbers $L(i)$ and $N$ so that $\sin^2[(2N+1)\arcsin\sqrt{P}]=1$, we can amplify the success probability up to unit.}
\label{fig:illustrate}
\end{figure*}

\vspace{-2mm}

\subsection*{Multi-product algorithms in LCU circuits}
\vspace{-2mm}

\noindent 
We first  overview the original multi-product algorithm~\supercite{childs2012hamiltonian}, and then discuss how to improve it with an assistance of the OAA algorithm~\supercite{berry2017exponential}. 
When considering the second-order Trotter approximation of a unitary  operator of $\exp(-i\sum_{j=1}^m \mathcal{H}_jt)$ given $\exp(-i\mathcal{H}_jt)$, we have: 
\begin{gather}
\begin{split}
    e^{-i\sum_{j=1}^m \mathcal{H}_jt} & = \mathcal{S}_1^l(t/l)+E_3 t^3/l^2 + E_5 t^5/l^4 +\cdots,\\
    \mathcal{S}_1(t) & = \prod_{j=1}^me^{-i\mathcal{H}_jt/2}\prod_{k=m}^1e^{-i\mathcal{H}_kt/2}
    \end{split}
\end{gather}
where $\mathcal{S}_1(t)$ is the second-order Trotter product, $E_3$ and $E_5$ are  high-order error terms  which are functions of $\mathcal{H}_j$,   $t$ is the evolution time of the quantum system, and $l$ is the number of iterations.   
Simply increasing the $l$, one can reduce the high-order errors but cannot get rid of them. The key idea of the multi-product algorithm is to properly combine multiple low-order Trotterizations linearly, so that the low-order error terms can be cancelled.  
For example, by properly choosing the coefficients of $ c_n = {n^2}/({n^2-m^2})$ and $ c_m={m^2}/({m^2-n^2})$,  and linearly combining two second-order Trotterizations with $n$ and $m$ iterations respectively, one can directly cancel the third-order error term $E_3$ and thus improve the approximation to the fifth-order error:  
\begin{gather}
    e^{-i\sum_{j=1}^m \mathcal{H}_jt} = c_n\mathcal{S}_1^n(t/n)+c_m\mathcal{S}_1^m(t/m) -E_5t^5/n^2m^2 + \mathcal{O}(t^7)
\end{gather}
That being said, the error of quantum dynamics simulation grows in $\mathcal{O}(t^5)$ by a linear  combination of two Trotterizations. 
Experimentally, this could be realised using a LCU of two second-order Trotterizations. 
In general, one  can repeat the procedure and reach an error of $\mathcal{O}(t^{2k+1})$ by the LCU of $k$ second-order Trotterizations with a proper choice of their coefficients~\supercite{chin2010multi}: 
\begin{gather}
	\begin{split}
		e^{-i\sum_{j=1}^m \mathcal{H}_jt} = 
		&\sum_{q=1}^kc_q\mathcal{S}_1^{L(q)}(t/L(q))+\\
		&(-1)^{k-1}E_{2k+1}t^{2k+1}\prod_{q=1}^{k }\frac{1}{L^2(q)} + \mathcal{O}(t^{2k+3}), 
	\end{split}
	\label{eq:scheme}
\end{gather}
where $c_q$ are chosen as $ \prod_{p\ne q, p=1}^k [{L^2(q)}/({L^2(q)-L^2(p)})]$, the $c_q$ satisfy $\sum_{q=1}^k c_q = 1$,  and $L(q)$ is the number of iteration for the $q$-th second-order formula (without loss of generality, we assume $L(q)$ is a monotonically increasing function).

Fig.\ref{fig:illustrate} illustrates  the scheme of multi-product algorithm, where $A_i$ represents  $c_iS_1^{L(i)}$ ($i=1,2,\cdots,N$) in Eq.(\ref{eq:scheme}), and its physical implementation in LCU circuits. 
The LCU circuits can be realized with a sequence of quantum controlled-unitary operations \supercite{gui2006general}, as shown in Fig.\ref{fig:device}\textbf{a}.  
Implementing the multi-product algorithm in the LCU circuits can significantly improve the Trotterization precision. 

While obtaining  the required linear combination of $B_1=\sum_{i=1}^{N}A_i$, the LCU circuits will also return the results of other linear combination terms ($B_2$,...,$B_N$) that should be in spam. These unwanted combinations are caused by the entanglement between ancillary and data register, and will result in a low success probability of the LCU circuits (see more details in Supplementary Information). 
The success probability drops quickly as the number of combined unitary gate grows. 
This is because $c_q$ and $c_{q+1}$ are different in sign, which makes the success probability of $
\text{Pr}=1/(\sum_{q=1}^k |c_q|)^2$ less than the unit.
To overcome this, one can design the function $L(q)$ to meet $\sum_{q=1}^k |c_q|\approx 1$ to improve the probability. 
However, there is a trade-off between high success probability ($\sum_{q=1}^k |c_q|\approx 1$) and low algorithm error ($\sum_{q=1}^k c_q = 1$) in original multi-product algorithm, making it impossible to target for both of them. It is this trade-off that reduces multi-product algorithm's potential in precision.

\begin{figure*}[ht!]
\centering 
\includegraphics[width=1\linewidth]{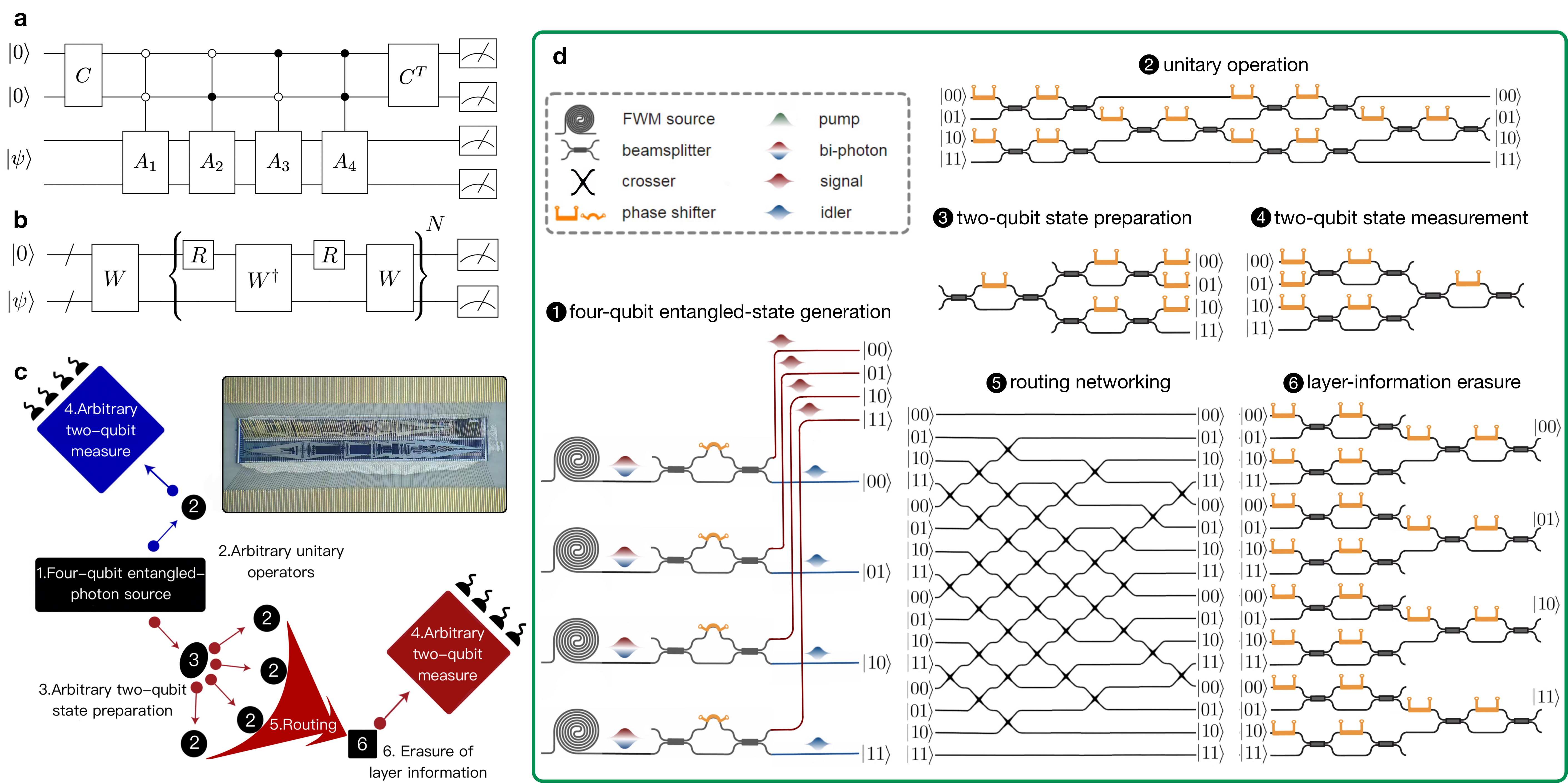}
\caption{\textbf{A programmable LCU quantum simulator in a silicon nanophotonic chip.}  
\textbf{a}, Quantum circuit for a four-qubit LCU circuit. 
The top two qubits refer to an ancillary register initialized in $\ket{00}$ state, and the bottom two qubits refer to a data register  prepared in $\ket{\psi}$ state. 
The LCU is enabled by a sequence of quantum controlled-unitaries. 
 \{$A_1,A_2,A_3,A_4$\} are arbitrary two-qubit unitaries, in which the target Hamiltonian is loaded in.  
For $C$ $(C^{T})$, its $i$-th element in the first column (row) 
is given by $({c_i/\sum_{q=1}^k |c_q|})^{1/2}$. 
\textbf{b},  OAA quantum circuit. The $W$ represents the LCU circuit described in \textbf{a}, and $R$ can flip the amplitude of basis $\ket{00}$ in the ancillary register. The bracketed circuit is repeated for $N$ times to amplify the successful probability of the LCU-based quantum algorithms, 
 when reading out outcomes by measuring the ancillary register in the $\ket{00}$ basis. 
 \textbf{c} and \textbf{d},  modularized scheme of the LCU quantum simulator. 
The simulator can implement the LCU circuit in \textbf{a}, and can be fully reprogrammed and reconfigured to implement the quantum algorithms. Inset: photograph of a LCU quantum simulator in a silicon chip.
 }
\label{fig:device}
\end{figure*}

\begin{figure}[ht!]
	\centering
	\includegraphics[width=.35\textwidth]{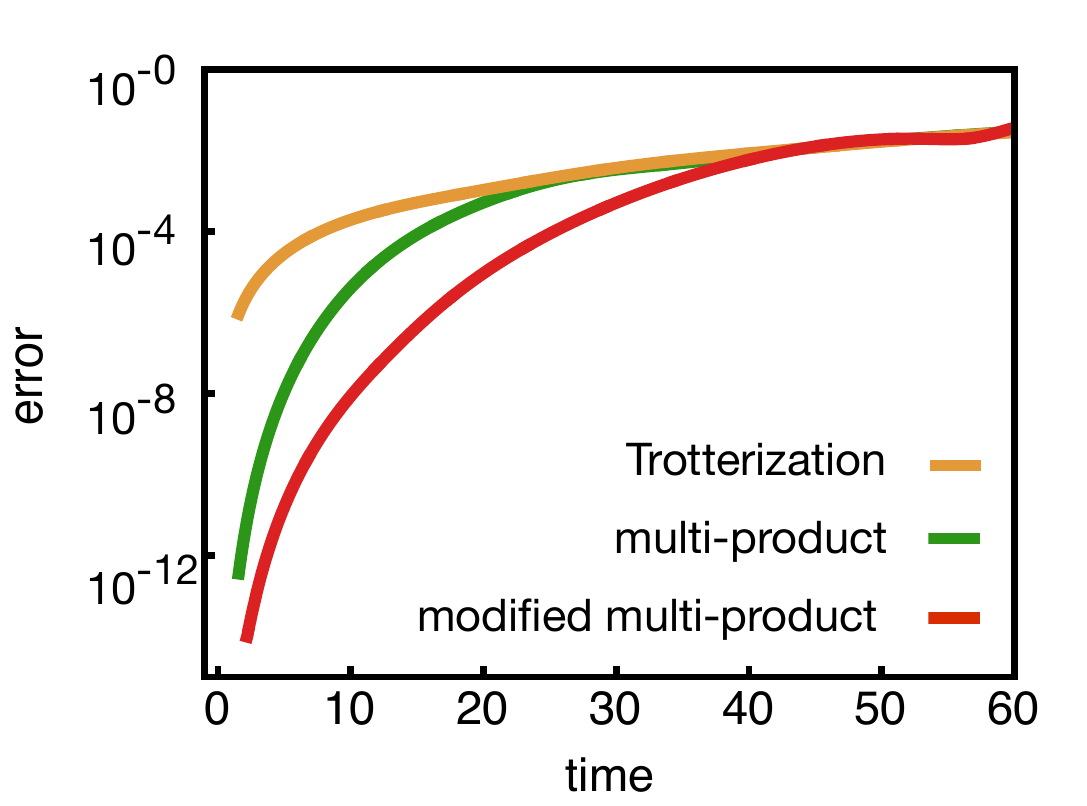}
	\caption{\textbf{Theoretical analysis of different Trotterization algorithms.} 	
		We compare the simulation error between the standard Trotter formula, multi-product and modified multi-product algorithms.
		The simulation error is defined as 
		$\left\| \ket{\psi_\text{0}} - \ket{\psi^\prime}\right\|$,  
		where $\ket{\psi_0}$  is the exact state and $\ket{\psi^\prime}$ refers to the output state of quantum algorithms. In the analysis, we consider a Hamiltonian (Eq.\ref{eq:H}) that  describes the interaction of an electron spin and a nuclear spin with initial state $\sqrt{0.3}\ket{00}+\sqrt{0.7}\ket{01}$. The iteration number ($l$) of Trotter formula, original and modified multi-product algorithms are $96, \{1,2,3,96\}$ and $\{4,8,16,32\}$, respectively. Note that the additional circuit depth and error of OAA have been taken into consideration.}
	\label{fig:comparison}
\end{figure}

We propose a modified multi-product algorithm, that is assisted by OAA, as shown in Fig.\ref{fig:illustrate}. 
The OAA algorithm can amplify the LCU success probability to near unit in a very similar way to Grover algorithm. The quantum circuit of OAA is plotted in Fig.\ref{fig:device}\textbf{b}. The OAA  is enabled by implementing a   $U_\text{OAA}$ = $ (-WRW^\dagger R)^N W$ circuit on the $\ket{0}\ket{\psi} $ state, where $W$ operator represents the LCU circuit,  $R$ is the amplitude flip operator defined as ($( I - 2\ket{0}\bra{0})\otimes I$), and  $N$ refers to the number of amplification iteration. 
When operating a single $W$ operator on the state, we have: 
\begin{gather}
    W\ket{0}\ket{\psi} = \sin\theta \ket{0}U\ket{\psi} + \cos\theta \ket{\Phi^\perp}, 
    \label{eq5}
\end{gather}
where the operator $U$ is the target operator of LCU, $\theta$ indicates the complex amplitude of expected state, and $ (\ket{0}\bra{0}\otimes I)\ket{\Phi^\perp} = 0$. If we project this state with projector $\ket{0}\bra{0}\otimes I$, it will turn into a typical LCU algorithm with success probability of $\sin^2(\theta)$. When operating the OAA operator on the state, 
given the $U$ is unitary, the OAA circuit will amplify $\theta$ and return the output state as: 
\begin{gather}
  U_\text{OAA}  \ket{0}\ket{\psi} = 
    \sin[(2N+1)\theta] \ket{0}U\ket{\psi} + \cos[(2N+1)\theta] \ket{\Phi^\perp}
\end{gather}
This process is very similar to Grover algorithm. By doing this, the OAA circuit can thus enhance the success probability from $P$ to $\sin^2[(2N+1)\arcsin(\sqrt{P})]$ (see more details about OAA's amplification effect in Supplementary Information). 
Obviously, OAA circuit requires larger circuit resource compared with LCU. In fact, it requires $\mathcal{O}((2N+1)\chi)$ resources, where $\mathcal{O}(\chi)$ is the resource for  LCU. 

One notice that, 
if the LCU circuit returns a unitary operator with a success probability of $25\%$, the OAA can amplify the probability to $100\%$ with $N=1$; if the LCU returns a non-unitary and its success probability is not exactly $25\%$, the OAA can also allow an efficient  amplification~\supercite{berry2015simulating}. 
To certificate the observation, we take an example of multi-product algorithm with $L(q)=2^q$. 
In this case, the $\sum_{q=1}^k|c_q|$ quickly converges to 1.969 as $k$ increases (e.g., $\sum_{q=1}^7|c_q|\approx 1.969$), which results in $P\approx25.79\%$. Having OAA, it allows a fast amplification of the success probability to be $99.96\%$. Note that any $L(q)\propto c^q$ yields the same $c_q$, therefore the success probability of $L(q) = a\cdot 2^q$ ($a$ is a positive integer that can be chosen arbitrarily) can also be amplified to nearly $100\%$.

Given by Eq.(\ref{eq:scheme}), we choose $L(q) = a\cdot2^q$, and amplify the output of LCU with OAA ($N=1$). This choice of $L(q)$ corresponds to LCU success probability $\approx 25\%$, OAA can thus amplify it to about $100\%$. Taking the non-unitary effect of LCU into account, we can both achieve an error of:
\begin{gather}
    \left\| \ket{\psi_\text{0}} - \ket{\psi^\prime}\right\|\le
    \left\|E_{2k+1}\right\|\frac{t^{2k+1}}{a^{2k} 2^{k^2+k-1}}
\end{gather}
which scales much slower than original multi-product algorithm as $k$ grows ($\ket{\psi_0}$ and $\ket{\psi^\prime}$ are the exact state and the output state of algorithms, respectively, $k$ is the number of combined unitary gates and $\left\|\cdot\right\|$ denotes the L2 norm for vector and spectral norm for matrix) and a nearly deterministic success probability. See more details about the derivation in Supplementary Information. 

In order to demonstrate modified multi-product algorithm's improvement in precision compared with original multi-product algorithm and Trotterization in a more intuitive way, we plot the corresponding simulation error in Fig.\ref{fig:comparison}. We choose $L(q)=2\cdot 2^q, q\in\{1,2,3,4\}$ for modified multi-product algorithm, $L(q)=1,2,3,96$ for original multi-product algorithm and $l=96$ for Trotterization to make them have similar circuit depths. 
This precision improvement will be more significant when we combine more unitary operators in algorithm.

\vspace{-2mm}
\subsection*{A programmable LCU-based photonic quantum simulator.}
\vspace{-2mm}

\noindent
Figures~\ref{fig:device}\textbf{c} and \ref{fig:device}\textbf{d} illustrate  the schematic of a programmable four-qubit silicon-photonic quantum simulator for the implementation
, and a photograph for the simulator is shown in the inset of Fig.\ref{fig:device}\textbf{c}. 
We implement the on-chip mapping of quantum dit (qudit) states\supercite{chi2022programmable} into multiple quantum bit (qubit) states. This allows us to implement high-fidelity and arbitrary entangling operations between the qubits, which are key to the implementation of LCU-based quantum algorithms. 

We  now discuss how to implement  the LCU circuit in the quantum simulator.  
As shown in Fig.\ref{fig:device}\textbf{a}, the LCU circuit is enabled by a sequence of programmable CCU gates.  
The photonic quantum simulator in Fig.\ref{fig:device}\textbf{d} integrates several key functional modules, including arbitrary local single-qubit preparation, quantum controlled-controlled unitary operation (CCU), and local projective measurement. The quantum simulator monolithically integrates 451 optical components, and it was fabricated on the Silicon-on-Insulator platform using complementary metal-oxide-semiconductor (CMOS) processes. Fabrication details are provided in Supplementary Information. 

We map quantum dit (qudit) states into multiple quantum bit (qubit) states. 
For example, in our photonic chip, we have the following mapping: 
\begin{equation}
	\left\{
	\begin{aligned}
		&\ket{0}_\text{qudit}\leftrightarrow\ket{00}_\text{qubit},~~
		\ket{1}_\text{qudit}\leftrightarrow \ket{01}_\text{qubit},\\
		&\ket{2}_\text{qudit}\leftrightarrow \ket{10}_\text{qubit},~~
		\ket{3}_\text{qudit}\leftrightarrow \ket{11}_\text{qubit}.~~
		\label{eq:mapping}
	\end{aligned}
	\right.
\end{equation}
In this work, the notation of $\ket{0}$, $ \ket{1}$ refers to the qubit representation. 

As shown in  Fig.\ref{fig:device}\textbf{d}, a pair of single photons are created in integrated spontaneous four-wave mixing (SFWM) sources. 
By coherently pumping four sources and by wavelength demultiplexing, it returns a four-dimensional Bell state~\supercite{Wang16D}.   
Using the mapping in Eq.({\ref{eq:mapping}}), we rewrite it as a four-qubit entangled state: 
\begin{equation}
	\left(\ket{0000}+\ket{0101}+\ket{1010}+\ket{1111}\right)/2, 
\end{equation}
where \{$\ket{0}, \ket{1}$\} represent the logical basis of each qubit. 
The basic idea is to coherently translate the entanglement in the source module to the CCU entangling gates \supercite{Zhou2011,Wang:QHL,Santagati2018}.

Firstly, we apply the space expansion procedure to create a six qubit entangled state: 
\begin{equation}
	\frac{1}{2}\left(\ket{00}\ket{00,\psi}+\ket{01}\ket{01,\psi}+\ket{10}\ket{10,\psi}+\ket{11}\ket{11,\psi}\right), 
\end{equation}
where the first two qubits represent the ancillary-register states, 
and the two-qubit state $\ket{\psi}$ represents the data-register and it can be arbitrarily initialized by module 3. The $\ket{ij,\psi}$ ($i,j=0,1$) denotes the state $\ket{\psi}$ encoded in the $(i,j)$-th layer qubits (will be erased eventually) that are entangled with the ancillary-register qubits. 

Secondly, we implement local operations $\{A_{1},A_{2},A_{3},A_{4}\}$  on the $\ket{\psi}$, where $A$ represents two-qubit unitary realised by  module 2.
We obtain a state of: 
\begin{equation}
	\begin{split}
		\frac{1}{2}(&\ket{00}A_{1}\ket{00,\psi}+\ket{01}A_{2}\ket{01,\psi}+\\
		&\ket{10}A_{3}\ket{10,\psi}+\ket{11}A_{4}\ket{11,\psi}). 
	\end{split}
\end{equation} 

Thirdly,
we implement projective measurement $M$ (module 4) on the qubits of ancillary-register, and post-select its $\ket{00}\bra{00}$ outcomes. 
This process 
projects the data-register into an output of 
\begin{equation}
	m_1A_{1}\ket{00,\psi}+ m_2A_{2}\ket{01,\psi}+ m_3A_{3}\ket{10,\psi}+ m_4A_{4}\ket{11,\psi}, 
\end{equation}
where \{$m_1,m_2,m_3,m_4$\} correspond to the first column of the $C$ operator in Fig.\ref{fig:device}\textbf{a}.

\begin{figure}[t]
	\centering 
	\includegraphics[width=0.34\textwidth]{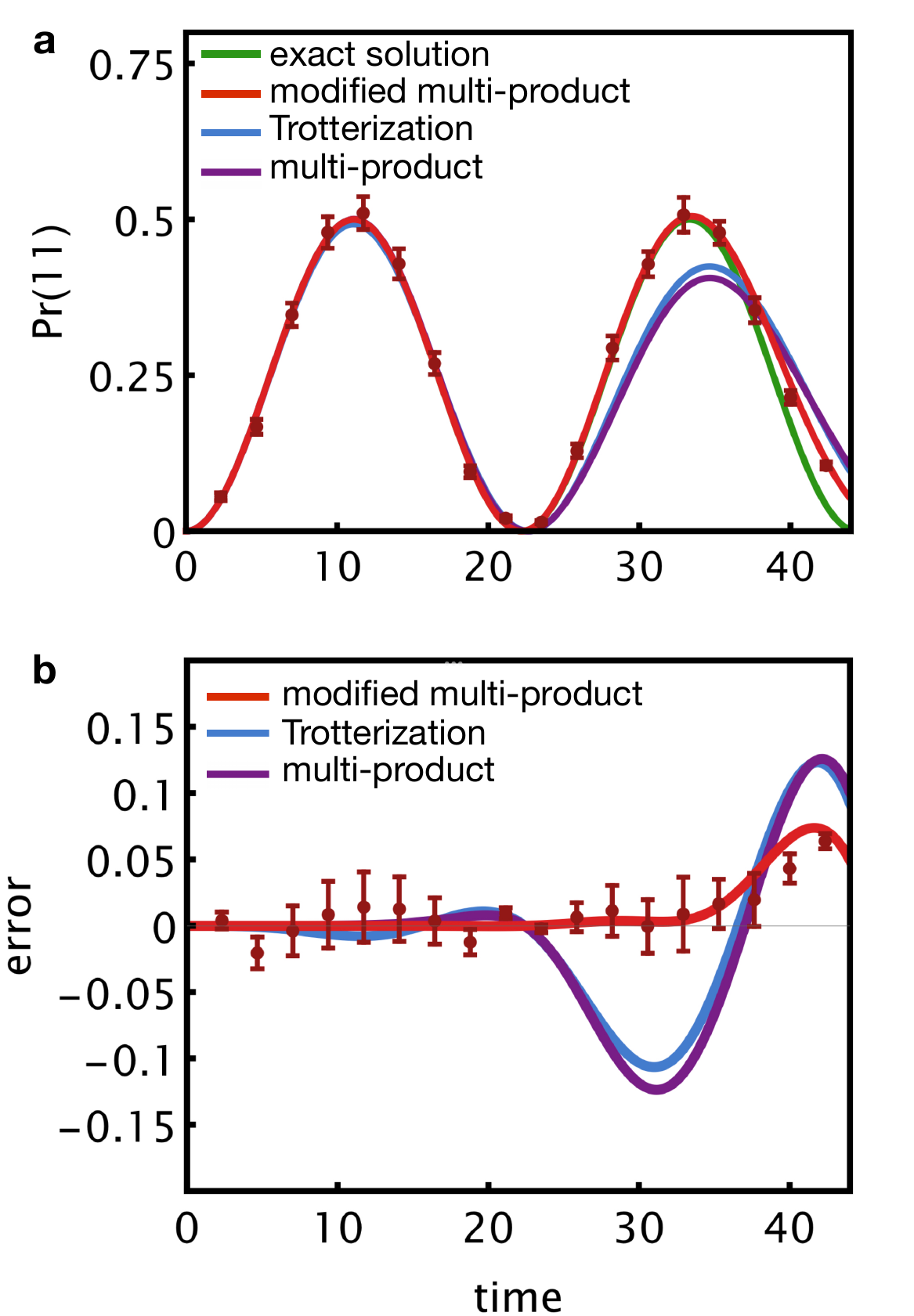}
	\caption{\textbf{Emulating quantum dynamics in a LCU quantum simulator.} \textbf{a}, A comparison of standard Trotterization, multi-product and the modified multi-product algorithm, in term of their simulation accuracy. In the data register, an initial state $\ket{01}$ is input into the simulator, and dynamics for the state of $\ket{11}$ is measured. 
\textbf{b}, Simulated error for different algorithms. Error is defined as the discrepancy of exact solution and computed results. 
Quantum dynamic evolution of four different states. 
Points denote experimental data, and lines refer to theoretical predictions. 
Both theoretical prediction and experiment show that the modified multi-product algorithm can achieve higher precision than the Trotterization and multi-product algorithm, when $t\le 30$. 
Error bars ($\pm 1 \sigma$) are estimated from the photon Poissonian statistics.}
	\label{fig:LCUResult}
\end{figure}

Finally, the coherent compression modules (modules 5,6) implements the $C^{\prime}$ operator, as well as the projector of $\ket{00}_{p}\bra{00}$ acting on the layer qubits 
$\ket{ij}$ in $\ket{ij, \psi}$. Then the output state is given as: 
\begin{gather}
	\begin{split}
		&m_1\ket{00}_p\bra{00}C^{\prime} A_{1}\ket{00,\psi} + m_2\ket{00}_p\bra{00}C^{\prime} A_{2}\ket{01,\psi} +\\ &m_3\ket{00}_p\bra{00}C^{\prime} A_{3}\ket{10,\psi} + m_4\ket{00}_p\bra{00}C^{\prime} A_{4}\ket{11,\psi}, 
	\end{split}
\end{gather} 
where the unitaries $\{C^{\prime},A_{1},A_{2},A_{3},A_{4}\}$ and complex numbers $\{m_1,m_2,m_3,m_4\}$ can be arbitrarily controlled and reprogrammed.  
The above post-selection only takes account of the elements in the first row of the matrix representation of gate $C^{\prime}$. Similar procedures can be applied to other projective post-selections. 
\begin{figure*}[t]
	\centering 
	\includegraphics[width=0.95\linewidth]{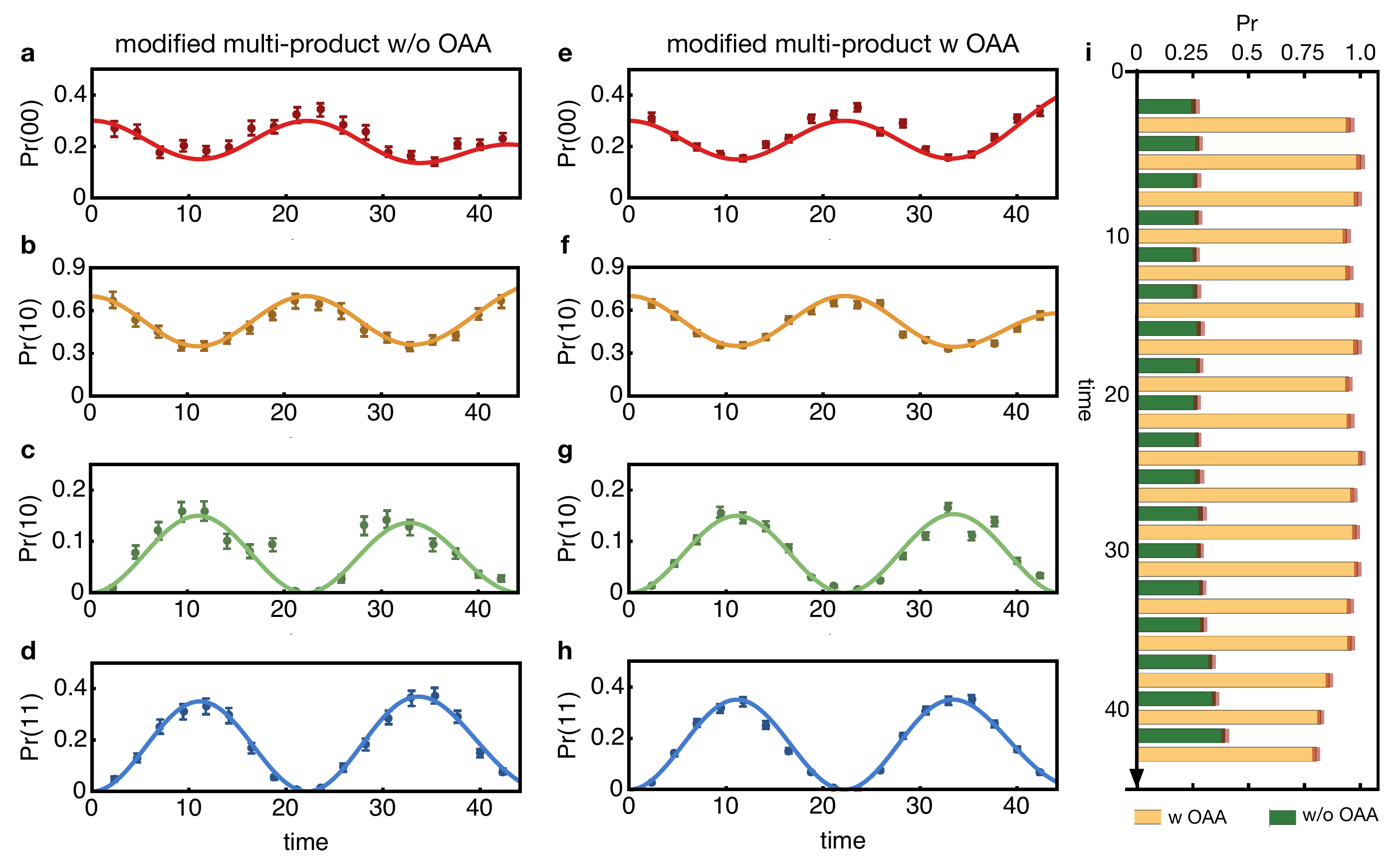}
	\caption{\textbf{Experimental results of modified multi-product algorithms with and without OAA.} 
\textbf{a-d}, Quantum dynamics of four states \{$\ket{00}, \ket{01}, \ket{10}, \ket{11}$\}, simulated using the modified multi-product algorithm without the OAA. \textbf{e-h}, Quantum dynamics of four  states \{$\ket{00}, \ket{01}, \ket{10}, \ket{11}$\}, simulated using the modified multi-product algorithm with the OAA. 
	\textbf{i}, Measured success probability for the modified multi-product algorithm with OAA ($99.79\%$ for $t<30$) and without OAA ($26.33\%$ for $t<30$). The success probability of the modified multi-product algorithms thus can be amplified by OAA. 
	Points denote experimental data, and lines refer to their theoretical predictions. Error bars ($\pm 1 \sigma$) are estimated from the photon Poissonian statistics. 
	}
	\label{fig:OAAResult}
\end{figure*}
Assuming the first row of $C^\prime$ is $\{m_1^\prime,m_2^\prime,m_3^\prime,m_4^\prime\}$ (arbitrary complex numbers), we obtain the output state of data-register as:  
\begin{gather}
	\begin{split}
		\ket{\Psi}_\text{data}   =& (m_1m_1^\prime A_{1}+ m_2m_2^\prime A_{2}+ 
		m_3m_3^\prime A_{3}+ m_4m_4^\prime A_{4})\ket{00,\psi}\\
		=& (m_1m_1^\prime A_1 + m_2m_2^\prime A_2+m_3m_3^\prime A_3+m_4m_4^\prime A_4)\ket{\psi},\label{eq4}
	\end{split}
\end{gather}
where we have removed the layer index  for clarity. 
Though $m_i$ and $m_i^\prime$ can be chosen arbitrarily, 
the combined coefficients $  m_im_i^\prime = {c_i}/{\sum_{i=1}^{4} |c_i|} $ can be optimized to improve the success probability of the algorithm.  
When $m_i=m_i^\prime$, 
the success probability of LCU becomes the largest (see proof in Supplementary Information). For this reason, we choose $C^\prime$ to be $C^T$ (see Fig.\ref{fig:device}\textbf{a}), so that the elements in first column of $C$ and first row of $C^T$ are the same.

\vspace{-2mm}
\subsection*{Experimental benchmarking of multi-product algorithms}
\vspace{-2mm}
\noindent
We benchmark the modified multi-product algorithms in our LCU quantum simulator to simulate the dynamics of a Hamiltonian describing the Rabi-type interactions of an electron spin and a nuclear spin in the rotating frame of external electromagnetic field frequency. The Hamiltonian $\mathcal{H}$ is written as: 
\begin{equation}
    \mathcal{H}=\frac{\Omega}{2}\sigma_x^e+\frac{\delta}{2}\sigma_z^e+\ket{1}_e\bra{1}\otimes (\textbf{E}_1\ket{0}_n\bra{0}+\textbf{E}_2\ket{1}_n\bra{1}), 
    \label{eq:H}
\end{equation}
where $\sigma_x$ and $\sigma_ z$ are Pauli matrices, subscript and superscript of $n$ and $e$ represent the nuclear and electron spins, respectively, $\ket{0}_n$ and $\ket{1}_n$ represent the ground state and first excited state with eigenenergies of $E_1$ and $E_2$, and $\Omega$ is the frequency of the external electromagnetic field. 
  The Hamiltonian is defined by the following parameters in experiment, $E_1=0.3, E_2=0.7, \Omega=0.2, \delta=0.5$. 
We use the LCU quantum simulator to emulate the unitary $U(t)=\text{exp}(-i\mathcal{H}t)$. 
The operators \{$A_1,A_2,A_3,A_4$\}  are chosen as \{$S_1^2(t/2)$, $S_1^4(t/4)$, $S_1^8(t/8)$, $S_1^{16}(t/16)$\}, respectively. The $S_1(t)$ is second-order Trotter product of 
$ e^{-i\mathcal{H}_1t/2}e^{-i\mathcal{H}_2t}e^{-i\mathcal{H}_1t/2} $, where 
   $ \mathcal{H}_1$ is the first two terms in Eq.(\ref{eq:H}), $ \mathcal{H}_2$ is the last term.

Fig.\ref{fig:LCUResult}\textbf{a} shows experimental probability distributions for the $\ket{11}$ state measured in the LCU simulator, emulating the dynamic evolution of a $\ket{01}$ initial state governed by the Hamiltonian. 
We compare the results of modified multi-product algorithm (combination of $S_1^2(t/2)$, $S_1^4(t/4)$, $S_1^8(t/8)$ and $S_1^{16}(t/16)$) with standard Trotterization (e.g., $S_1^{16}(t/16)$) and original multi-product algorithm (combination of $S_1(t)$, $S_1^2(t/2)$, $S_1^3(t/3)$ and $S_1^{16}(t/16)$). 
It shows that, for $t\le 30$, the modified multi-product algorithm can approximate the exact solution more precisely than the other two algorithms. Their discrepancies to the exact result are plotted in Fig.\ref{fig:LCUResult}\textbf{b}. 
For $t\textgreater 30$, both original and modified multi-product show deviation from the exact solution due to high-order error and the non-unitary effect. 
 
Figure \ref{fig:OAAResult}(\textbf{a-h}) reports experimental dynamics for the four logical states measured in the quantum simulator, when the initial state  is chosen as $\sqrt{0.3}\ket{00}+\sqrt{0.7}\ket{01}$, i.e., $\sqrt{0.3}\ket{0}_e\ket{0}_n+\sqrt{0.7}\ket{0}_e\ket{1}_n$. Both modified multi-product algorithm before OAA (Figs.\ref{fig:OAAResult}(a-d)) and after OAA (Figs.\ref{fig:OAAResult}(e-h)) are implemented. Note  that the OAA circuits were  complied and implemented by iteratively input into the quantum simulator. 
Experimental results well agree with their theoretical predictions, and a mean of classical statistical fidelities of  $F_c=0.996\pm0.005$ and $F_c=0.996\pm0.004$ are obtained, respectively, for the modified multi-product algorithm with and without OAA. 
The $F_c$ is defined as $(\sum_i\sqrt{p_iq_i})^2$, where $p_i$ and $q_i$ are  measured distribution and theoretical distribution respectively. 
Moreover, 
we measured the success probability of algorithms, as shown in Fig~\ref{fig:OAAResult}\textbf{i}. 
Experimental results show that, when $t\le 30$, the success probability $P$ is about $26.33\%$ in the modified multi-product algorithm before OAA step. 
We then implement OAA and obtain the amplified probability. 
As the initial $Pr$ from LCU was approximated to $26.33\%$, it requires only one step of amplification, i.e. $N=1$. 
The OAA circuit in Fig.\ref{fig:device}\textbf{b} can be simplified to $WRW^\dagger RW$, and compiled   in the simulator.  
By implementing the OAA, we amplify the success probability to $99.79\%$ at $t\le30$, approaching a deterministic probability. 
All measured probabilities are consistent with  theoretical predictions. 
When $t>30$, due to the non-unitary LCU, the high-level success probability however cannot be retained. 

Overall we have proposed and demonstrated a modified multi-product algorithm. 
By combing the multi-product and OAA algorithms, it can improve the simulation precision and relax the success probability constraints.
The algorithm is certificated experimentally in a small-scale silicon-photonic quantum simulator. 
Moreover, the linearly combined operations are not necessarily unitary (mostly, they are non-unitary), 
implying that LCU can be used to simulate the dynamics of non-Hermitian systems. 
The quantum algorithms and quantum harware could be scaled up to provide an efficient solution to simulate the dynamics of physical and molecular systems.

\subsection*{Acknowledgements} 
\vspace{-2mm}
\noindent We acknowledge support from the Innovation Program for Quantum Science and Technology (no. 2021ZD0301500), the National Key R$\&$D Program of China (no 2019YFA0308702), the Natural Science Foundation of China (no 61975001), Beijing Natural Science Foundation (Z190005), and Key R$\&$D Program of Guangdong Province (2018B030329001). 

\vspace{-2mm}
\subsection*{Author contributions} \vspace{-2mm}
\noindent Y.Y. and Y.C. equally contributed  to this work. 
Y.Y., Y.C. and J.H. implemented the experiment. 
Y.Y., Y.C. and C.Z. provided theoretical analysis. Q.G. and J.W. managed the project. 
Y.Y., Y.C., and J.W. wrote the manuscript. 

\vspace{-2mm}
\subsection*{Competing interests:} \vspace{-2mm}
\noindent The authors declare no competing interests.

\vspace{-2mm}
\subsection*{Data availability}\vspace{-2mm}
\noindent The data that support the plots within this paper and other findings of this study are available from the corresponding author upon reasonable request.

\printbibliography[notkeyword=SI]

\clearpage
\pagenumbering{arabic}
\setcounter{page}{1}
\onecolumn
\section*{\centering\fontsize{14}{14}\selectfont 
Supplementary Information: \\
Simulating Hamiltonian dynamics in a programmable photonic quantum processor using linear combinations of unitary operations
}
\begin{center}
\begin{minipage}{1\textwidth}
\begin{center}
\begin{spacing}{1.5}

{Yue Yu$^{1,\star}$, Yulin Chi$^{1,\star}$, Chonghao Zhai$^{1}$, Jieshan Huang$^{1}$, Qihuang Gong$^{1,2,3,4,5}$,  Jianwei Wang$^{1,2,3,4,5,\dagger}$}

\textit{\textrm{
\textsuperscript{1} State Key Laboratory for Mesoscopic Physics, School of Physics, Peking University, Beijing, 100871, China
\\\textsuperscript{2} Frontiers Science Center for Nano-optoelectronics \& Collaborative Innovation Center of Quantum Matter, Peking University, Beijing, 100871, China 
\\\textsuperscript{3} Collaborative Innovation Center of Extreme Optics, Shanxi University, Taiyuan 030006, Shanxi, China
\\\textsuperscript{4} Peking University Yangtze Delta Institute of Optoelectronics, Nantong 226010, Jiangsu, China.
\\\textsuperscript{5} Hefei National Laboratory, Hefei 230088, China
}}
\end{spacing}
\end{center}
\end{minipage}
\end{center}

\setcounter{figure}{0}
\makeatletter 
\renewcommand{\thefigure}{S\@arabic\c@figure}
\makeatother

\setcounter{table}{0}
\makeatletter 
\renewcommand{\thetable}{S\@arabic\c@table}
\makeatother

\begin{spacing}{1.5}

\vspace{-3mm}
\noindent \textbf{Device fabrication.}
The integrated photonic four-qubit quantum simulator used to implement the linear combinations of unitaries for Hamiltonian simulation is designed and fabricated on the silicon nanophotonics platform. The  devices are fabricated by the standard CMOS (complementary metal oxide semiconductor) process. First, spin a layer of photoresist on an 8-inch SOI (silicon on insulator) wafer with 220 nm thick top silicon and 3 \micro m  thick buried oxide. 248 nm deep ultraviolet (DUV) lithography is used to define the pattern of the silicon layer on the photoresist where single-mode waveguides are designed with a width of 500nm (widened to 4 microns in some areas to reduce transmission losses). The pattern is transferred from the photoresist layer to the silicon layer by double inductively coupled plasma (ICP) etching process to form waveguide and grating couplers. A deeply etched waveguide with an etching depth of 220 nm is used for the SFWM photon source, beam splitter(multimode interferometer) and phase shifter. A shallow etched waveguide with an etching depth of 70 nm is used for waveguide crossovers and grating couplers. A 1\micro m thick SiO\textsubscript{2} layer is deposited on the SOI wafer by plasma-enhanced chemical vapor deposition (PECVD) as an isolation layer between the waveguide and the metal heater. Then, a 10 nm thick Ti adhesive layer, a 20 nm thick TiN barrier layer, an 800 nm thick AlCu layer and a 20 nm thick TiN antireflection layer were deposited by physical vapor deposition (PVD), in which AlCu was used as a conductive layer and patterned by DUV lithography and etching process to form electrodes. Then, a 50 nm thick TiN layer for the thermooptical phase shifter was deposited by PECVD and patterned by DUV lithography and etching process. Finally, 1 \micro m thick SiO\textsubscript{2} is deposited as the top coating. The bonding pad opening process is then carried out to remove the SiO\textsubscript{2} and TiN after DUV lithography in specific areas designed for wire bonding.

\vspace{3mm}

\noindent {\textbf{Experimental setup.} 
A tunable continuous-wave laser central at a wavelength of 1550.12 nm was amplified to 100 mW by an erbium doped fiber amplifier (EDFA). Then the amplified laser is filtered by array waveguide gratings (AWG) wavelength-division multiplexing (WDM) to remove noise from the amplifier, and then coupled to the chip through on-chip grating couplers. In order to improve the coupling efficiency of the grating coupler, we use a fiber polatization controller to adjust the polarization of the laser before the chip to achieve the highest coupling efficiency, which is about -4 dB each port.
The entangled photon pairs are generated in on-chip sources through the spontaneous four-wave mixing (SFWM) effect of silicon from the pump laser. We chose 1545.32nm and 1554.94nm as the wavelength of idler and signal photons, and photons of other wavelengths and the residual pump light are filtered out by off-chip AWG with a bandwidth of 0.9 nm and extinction ratio $\textgreater90$ dB. 
The entangled photons were detected by superconducting nanowire single-photon detectors (SNSPD) with 85\% efficiency in average and coincident counting measurements were conducted using a multichannel time interval analyzer (TIA). The resulting two-photon coincidence count rate is at the order of kHz.
The integrated photonic chip is wire bonded on a printed circuit board (PCB) and the PCB is fixed with thermal epoxy on a copper constant temperature terrace to maintain the modulation temperature of the thermo-optic phase modulator, providing stable temperature conditions and suppressing thermal crosstalks.
At the design of practical devices, we use curved waveguides to compensate the length of different structures to ensure coherence. At the same time, we also designed several 1\% couplers at some positions not marked in the figure to calibrate the phase modulators on the chip. They all use the same on-chip grating couplers to couple with the off-chip fibers.
}

\vspace{3mm}

\noindent {\textbf{Linear combination of unitary operations.}} For readers unfamiliar with LCU, we provide a brief introduction to it here. The LCU algorithm utilizes the entanglement between two quantum systems: N-dimensional ancillary register $\ket{\cdot}_a$ and M-dimensional data register $\ket{\cdot}_d$, and non-deterministically generates the linear combination of $N$ unitary gates $\sum_{i=1}^N c_iA_i$ ($c_i$ is coefficient and $A_i$ is unitary gate) that acts on data register.

Firstly, a product state, $\ket{0}_a\ket{\psi}_d$, is generated as the input state of LCU circuit. Since there is no entanglement between ancillary and data register in this quantum state, it is relatively easy to generate.

Secondly, we implement a N-qubit gate $C$ on the ancillary register:
\begin{gather}
	C =
	\begin{bmatrix}
		\sqrt{\frac{c_1}{\sum_{i=1}^N|c_i|}}&*&\cdots&*\\
		\sqrt{\frac{c_2}{\sum_{i=1}^N|c_i|}}&*&\cdots&*\\
		\vdots&\vdots&\ddots&\vdots\\
		\sqrt{\frac{c_N}{\sum_{i=1}^N|c_i|}}&*&\cdots&*\\
	\end{bmatrix}
\end{gather}
It can be seen from the above expression that the j-th element in first column of  $C$'s matrix representation is $\sqrt{c_j/\sum_{i=1}^N|c_i|}$. We don't care about elements in other columns since only elements in the first column affect the evolution of $\ket{0}_a$. After this step, the quantum state becomes
\begin{gather}
	\sqrt{\frac{1}{\sum_{i=1}^N|c_i|}}\sum_{j=1}^N \sqrt{c_j}\ket{j-1}_a\ket{\psi}_d
\end{gather}

Thirdly, we implement N C(N)U(M) gates (M-qubit gate controlled by N control qubits) $A_i$, $i=1,2,\cdots,N$. These gates are controlled by ancillary register and act on data register. This step projects the quantum state into an output of
\begin{gather}
	\sqrt{\frac{1}{\sum_{i=1}^N|c_i|}}\sum_{j=1}^N \sqrt{c_j}\ket{j-1}_aA_j\ket{\psi}_d
\end{gather}

Finally, we implement N-qubit gate $C^T$ on ancillary register. The matrix representation of $C^T$ is 
\begin{gather}
	C^T =
	\begin{bmatrix}
		\sqrt{\frac{c_1}{\sum_{i=1}^N|c_i|}}&\sqrt{\frac{c_2}{\sum_{i=1}^N|c_i|}}&\cdots&\sqrt{\frac{c_N}{\sum_{i=1}^N|c_i|}}\\
		*&*&\cdots&*\\
		\vdots&\vdots&\ddots&\vdots\\
		*&*&\cdots&*\\
	\end{bmatrix}
\end{gather}
This will lead to following output:
\begin{gather}
	\frac{1}{\sum_{i=1}^N |c_i|}\ket{0}_a\sum_{j=1}^N c_jA_j\ket{\psi}_d + \ket{1}_a\cdots + \ket{2}_a\cdots + \cdots
\end{gather}
It can be seen from the above expression that each ancillary register state $\ket{j}_a$ corresponds to a linear combination of $A_i$, but only the combination after $\ket{0}_a$ is what we want. Therefore, we post-select the ancillary register with projector $\ket{0}_a\bra{0}$. If we assume $\left\|\sum_{j=1}^Nc_jA_j\ket{\psi}_d\right\|=1$ (this assumption does not always hold, for example, when $\sum_{j=1}^Nc_jA_j$ is not unitary), then the post-selection probability is $\left(1/\sum_{i=1}^N|c_i|\right)^2$. Generally, this possibility drops quickly as $N$ increases.
\vspace{3mm}

\noindent {\textbf{Oblivious amplitude amplification.}
	Oblivious Amplitude Amplification~\supercite{berry2017exponential} (OAA) is an algorithm  proposed to solve the low success probability problem of LCU. It is very similar to Grover algorithm and the post-selection probability can be amplified by multiple iterations of the LCU circuit. To describe its main structure, we first define the following operators:
	\begin{gather}
		\begin{split}
			W\ket{\Psi} &= \sin(\theta)\ket{\Phi}+\cos(\theta)\ket{\Phi^\perp}\label{oaa1}\\
			P &= \ket{0}\bra{0}\otimes I\\
			R &= (1-2\ket{0}\bra{0})\otimes I
		\end{split}		
	\end{gather}
	where $\ket{\Psi}, \ket{\Phi}$ and $\ket{\Phi^\perp}$ have following relationships:
	\begin{align}
		\ket{\Psi} &= \ket{0}\ket{\psi}\\ 
		\ket{\Phi} &= \ket{0}U\ket{\psi}\\
		P &\ket{\Phi^\perp} = 0
	\end{align}
	$\ket{\psi}$ is arbitrary quantum state and $U$ is unitary operator. Obviously, $W$ can be used to describe a general LCU circuit and $P$ is the corresponding post-selection projector. Based on the above definitions, OAA tells us that the following equation holds:
	\begin{gather}
		(-WRW^\dagger R)^NW\mathinner{|\Psi\rangle} = \sin[(2N+1)\theta]\mathinner{|\Phi\rangle}+\cos[(2N+1)\theta]\mathinner{|\Phi^\perp\rangle}
	\end{gather}	
	Therefore, we can amplify the post-selection probability from $\sin^2(\theta)$ to $\sin^2((2N+1)\theta)$ with OAA.
	
	The verification process is shown below:
	
	Firstly, we define state $\ket{\Psi^\perp}$ as
	\begin{gather}
		W\ket{\Psi^\perp} = \cos(\theta)\ket{\Phi} - \sin(\theta)\ket{\Phi^\perp}\label{oaa2}
	\end{gather}
	It can be easily proved that the state $\ket{\Psi^\perp}$ always exists.
	
	The following lemma will be very useful in the verification of OAA:
	\newtheorem{lemma}{\bf Lemma}
	\begin{lemma}
		$ P \ket{\Psi^\perp} = 0$ when $\theta \ne 0$
		\label{oaalemma1}
	\end{lemma}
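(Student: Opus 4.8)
The plan is to work in the two-dimensional picture that underlies oblivious amplitude amplification and to reduce everything to the action of $W$ on the ancilla-$\ket{0}$ block. First I would record the structural facts that are available: $W$ is unitary, $\theta$ is real, and $\{\ket{\Phi},\ket{\Phi^\perp}\}$ is orthonormal (orthogonality is immediate from $P\ket{\Phi^\perp}=0$ together with $\ket{\Phi}=\ket{0}U\ket{\psi}\in\mathrm{range}\,P$, and the unit norms follow because $W$ preserves the norm of $\ket{\Psi}$). Crucially, the oblivious hypothesis is that the decomposition $W\ket{0}\ket{\psi}=\sin\theta\,\ket{0}U\ket{\psi}+\cos\theta\ket{\Phi^\perp}$ holds with the \emph{same} $U$ and $\theta$ for every data state; this is exactly the statement that the compressed block satisfies $(\bra{0}\otimes I)\,W\,(\ket{0}\otimes I)=\sin\theta\,U$. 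Taking adjoints and using that $\theta$ is real gives $(\bra{0}\otimes I)\,W^\dagger\,(\ket{0}\otimes I)=\sin\theta\,U^\dagger$.

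Since $\ket{\Psi^\perp}=W^\dagger(\cos\theta\ket{\Phi}-\sin\theta\ket{\Phi^\perp})$, proving $P\ket{\Psi^\perp}=0$ amounts to showing that the data-register vector $(\bra{0}\otimes I)\ket{\Psi^\perp}$ vanishes. I would compute the two contributions separately. The first is easy: writing $\ket{\Phi}=(\ket{0}\otimes I)\,U\ket{\psi}$ and applying the block identity gives $(\bra{0}\otimes I)W^\dagger\ket{\Phi}=\sin\theta\,U^\dagger U\ket{\psi}=\sin\theta\ket{\psi}$.

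The hard part will be the second contribution, $(\bra{0}\otimes I)W^\dagger\ket{\Phi^\perp}$, because $\ket{\Phi^\perp}$ lies \emph{outside} the ancilla-$\ket{0}$ block and the block identity cannot be applied to it directly. The trick is to eliminate $\ket{\Phi^\perp}$ using the defining relation for $\ket{\Psi}$: from $\cos\theta\ket{\Phi^\perp}=W\ket{0}\ket{\psi}-\sin\theta\ket{\Phi}$ together with $W^\dagger W=I$, I obtain $\cos\theta\,(\bra{0}\otimes I)W^\dagger\ket{\Phi^\perp}=(\bra{0}\otimes I)\ket{0}\ket{\psi}-\sin\theta\,(\bra{0}\otimes I)W^\dagger\ket{\Phi}=\ket{\psi}-\sin^2\theta\ket{\psi}=\cos^2\theta\ket{\psi}$, hence $(\bra{0}\otimes I)W^\dagger\ket{\Phi^\perp}=\cos\theta\ket{\psi}$ whenever $\cos\theta\ne 0$. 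Substituting both contributions, $(\bra{0}\otimes I)\ket{\Psi^\perp}=\cos\theta\sin\theta\ket{\psi}-\sin\theta\cos\theta\ket{\psi}=0$, which is the claim.

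Finally I would dispose of the degenerate case $\cos\theta=0$ separately, since the elimination step divides by $\cos\theta$. There the block identity makes $(\bra{0}\otimes I)W(\ket{0}\otimes I)$ unitary, and unitarity of $W$ forces the off-diagonal blocks to vanish (from $C^\dagger C=\cos^2\theta\,I=0$ one gets $C=0$, and the analogous relation from $WW^\dagger=I$ kills the other block), so $W$ is block-diagonal between the ancilla-$\ket{0}$ subspace and its complement; since $\ket{\Phi^\perp}$ sits in the complement, $W^\dagger\ket{\Phi^\perp}$ stays there and $P\ket{\Psi^\perp}=0$ trivially. The hypothesis $\theta\ne 0$ is what guarantees that $U$, and hence $\ket{\Phi}$, are well defined, so the two-dimensional rotation picture is nondegenerate.
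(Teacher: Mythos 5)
Your proof is correct, and it takes a genuinely different route from the paper's. The paper's proof is indirect: it introduces the auxiliary operator $Q:=(\bra{0}\otimes I)W^\dagger P W(\ket{0}\otimes I)$, proves $Q=\sin^2(\theta)\,I$ by evaluating $\braket{\psi|Q|\psi}$ on every input state (this is where the $\psi$-independence of $\theta$ enters), then computes $Q\ket{\psi}$ a second way via the inverted rotation relation $W^\dagger\ket{\Phi}=\sin(\theta)\ket{\Psi}+\cos(\theta)\ket{\Psi^\perp}$, and reads off $P\ket{\Psi^\perp}=0$ by comparing the two evaluations. You instead package the oblivious hypothesis as the compressed-block identity $(\bra{0}\otimes I)\,W\,(\ket{0}\otimes I)=\sin\theta\,U$, take its adjoint, and compute $(\bra{0}\otimes I)\ket{\Psi^\perp}$ directly, eliminating $\ket{\Phi^\perp}$ through $\cos\theta\,\ket{\Phi^\perp}=W\ket{\Psi}-\sin\theta\,\ket{\Phi}$. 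The two arguments consume the same inputs ($\psi$-independence plus unitarity of $W$), and your identity is in fact the sharper one: since $P=(\ket{0}\otimes I)(\bra{0}\otimes I)$, the paper's $Q$ factors into the product of the compressed blocks of $W^\dagger$ and $W$, so $Q=\sin^2(\theta)\,I$ follows immediately from your block identity. What your route buys: it avoids the "constant expectation value implies a multiple of the identity" step, it isolates exactly where obliviousness is used, and it repairs a small gap in the paper --- the paper's final line infers $P\ket{\Psi^\perp}=0$ from $\sin(\theta)\cos(\theta)\,(\bra{0}\otimes I)\ket{\Psi^\perp}=0$, which requires $\cos\theta\neq 0$ in addition to $\sin\theta\neq 0$, whereas the hypothesis $\theta\neq 0$ supplies only the latter; your separate block-diagonality treatment of the $\cos\theta=0$ case covers precisely this degenerate situation. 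What the paper's route buys: the $Q$-operator argument is the standard one in the OAA literature (Berry et al.) and carries over to settings where one controls only success probabilities (expectation values) rather than the full compressed block of $W$.
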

	\begin{proof}
		We define operator $Q:=(\bra{ 0}\otimes I)W^\dagger P W(\ket{0}\otimes I)$. Then $\forall\ket{\psi}$, we have:
		\begin{gather}
			\begin{split}
				\mathinner{\langle\psi|Q|\psi\rangle}     
				&= \left\| P W(\mathinner{|0\rangle}\otimes I)\mathinner{|\psi\rangle}\right\|\\
				&= \left\| P W\mathinner{|\Psi\rangle}\right\|\\     
				&= \left\| P\left(\sin(\theta)\mathinner{|\Phi\rangle} +\cos(\theta)\mathinner{|\Phi^\perp\rangle}\right)\right\|\\     
				&= \sin^2(\theta)
			\end{split}
		\end{gather}
		Since this holds for all $\ket{\psi}$, we have $Q = \sin^2(\theta)I$, $Q\ket{\psi} =\sin^2(\theta)\ket{\psi}$.
		
		From another direction, we can calculate $Q\ket{\psi}$ directly:
		\begin{gather}
			\begin{split} 
				Q\mathinner{|\psi\rangle} 
				&= (\mathinner{\langle 0|}\otimes I)W^\dagger P W(\mathinner{|0\rangle}\otimes I)\mathinner{|\psi\rangle}\\ 
				&= (\mathinner{\langle 0|}\otimes I)W^\dagger P W\mathinner{|\Psi\rangle}\\
				&=\sin(\theta) (\mathinner{\langle 0|}\otimes I)W^\dagger\mathinner{|\Phi\rangle}\\ 
			\end{split}\\
		\end{gather}
		Therefore
		\begin{gather}
			\sin^2(\theta)\ket{\psi} = \sin(\theta) (\mathinner{\langle 0|}\otimes I)W^\dagger\mathinner{|\Phi\rangle}\label{oaa3}
		\end{gather}
		
		From formula~\ref{oaa1} and~\ref{oaa2}, we have:
		\begin{gather}
			W^\dagger\ket{\Phi} = \sin(\theta)\ket{\Psi} + \cos(\theta)\ket{\Psi^\perp}
		\end{gather}
		Bring it back to formula~\ref{oaa3}
		\begin{gather}
			\begin{split} 
				\sin^2(\theta)\mathinner{|\psi\rangle}  
				&= \sin(\theta)(\mathinner{\langle 0|}\otimes I)W^\dagger\mathinner{|\Phi\rangle}\\ 
				&= \sin(\theta)(\mathinner{\langle 0|}\otimes I)(\sin(\theta)\mathinner{|\Psi\rangle}+\cos(\theta)\mathinner{|\Psi^\perp\rangle})\\ 
				&= \sin(\theta)\mathinner{|\psi\rangle}+\sin(\theta)\cos(\theta)(\mathinner{\langle 0|}\otimes I)\mathinner{|\Psi^\perp\rangle} \end{split}\\
		\end{gather}
		Therefore
		\begin{gather}
			(\bra{0}\otimes I )\ket{\Psi^\perp} =  P\ket{\Psi^\perp} =0
		\end{gather}
	\end{proof}

	Based on \textbf{Lemma}~\ref{oaalemma1}, we can prove the amplification effect of OAA:
	
	\newtheorem{thm}{\bf Theorem}
	\begin{thm}
		$(-WRW^\dagger R)^NW\ket{\Psi} = \sin[(2N+1)\theta]\ket{\Phi} + \cos[(2t+1)\theta]\ket{\Phi^\perp}$
		\label{oaathm1}
	\end{thm}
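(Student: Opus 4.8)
The plan is to reduce the amplification operator to a planar rotation in the two-dimensional subspace $\mathcal{V} = \mathrm{span}\{\ket{\Phi},\ket{\Phi^\perp}\}$, exactly as in Grover's algorithm. First I would record how $R$ acts on the four relevant vectors. Since $\ket{\Phi} = \ket{0}U\ket{\psi}$ lies in the range of $P$, we have $R\ket{\Phi} = -\ket{\Phi}$, while $P\ket{\Phi^\perp} = 0$ gives $R\ket{\Phi^\perp} = \ket{\Phi^\perp}$; likewise $R\ket{\Psi} = -\ket{\Psi}$ because $\ket{\Psi} = \ket{0}\ket{\psi}$, and crucially $R\ket{\Psi^\perp} = \ket{\Psi^\perp}$ by Lemma~\ref{oaalemma1}. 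So $R$ is the reflection that negates $\ket{\Phi}$ and $\ket{\Psi}$ and fixes their perpendicular partners.

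Next, since $W$ is unitary I would invert the defining relations~\ref{oaa1} and~\ref{oaa2} to obtain $W^\dagger\ket{\Phi} = \sin\theta\ket{\Psi} + \cos\theta\ket{\Psi^\perp}$ and $W^\dagger\ket{\Phi^\perp} = \cos\theta\ket{\Psi} - \sin\theta\ket{\Psi^\perp}$, which follows because the $2\times2$ matrix sending $(\ket{\Psi},\ket{\Psi^\perp})$ to $(\ket{\Phi},\ket{\Phi^\perp})$ is orthogonal, so its inverse is its transpose. With these in hand I would compute the action of $G := -WRW^\dagger R$ on $\{\ket{\Phi},\ket{\Phi^\perp}\}$ by applying each factor in turn ($R$, then $W^\dagger$, then $R$, then $W$, then the overall sign), using the $R$-rules above and the angle-addition identities. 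The computation should yield $G\ket{\Phi} = \cos 2\theta\,\ket{\Phi} - \sin 2\theta\,\ket{\Phi^\perp}$ and $G\ket{\Phi^\perp} = \sin 2\theta\,\ket{\Phi} + \cos 2\theta\,\ket{\Phi^\perp}$; that is, $G$ restricted to $\mathcal{V}$ is rotation by $2\theta$.

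Once $G$ is a rotation by $2\theta$ on $\mathcal{V}$, I would note that $\mathcal{V}$ is $G$-invariant, so $G^N$ is rotation by $2N\theta$, which I establish by induction on $N$ (composing planar rotations). Finally, writing $W\ket{\Psi} = \sin\theta\ket{\Phi} + \cos\theta\ket{\Phi^\perp}$ as the vector $(\sin\theta,\cos\theta)$ in $\mathcal{V}$ and applying the $2N\theta$ rotation, the angle-addition formulas collapse $2N\theta + \theta$ into $(2N+1)\theta$, giving $\sin[(2N+1)\theta]\ket{\Phi} + \cos[(2N+1)\theta]\ket{\Phi^\perp}$, which is the claim.

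The part requiring the most care is verifying that $\mathcal{V}$ is genuinely invariant under every factor of $G$: this is precisely where Lemma~\ref{oaalemma1} does the real work, since $R\ket{\Psi^\perp} = \ket{\Psi^\perp}$ is exactly what keeps $RW^\dagger$ from mapping $\mathcal{V}$ outside $\mathrm{span}\{\ket{\Psi},\ket{\Psi^\perp}\}$. The remaining steps are bookkeeping with rotation matrices; I would double-check the single overall minus sign in $G$ that converts the product of two reflections into a proper $2\theta$ rotation, since dropping it produces a rotation by $\pi - 2\theta$ or a sign error, the usual trap in amplitude-amplification arguments.
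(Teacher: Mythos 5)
Your proposal takes essentially the same route as the paper's proof: use Lemma~\ref{oaalemma1} to fix the action of $R$ on $\{\ket{\Phi},\ket{\Phi^\perp},\ket{\Psi},\ket{\Psi^\perp}\}$, invert the defining relations of $W$ to get $W^\dagger\ket{\Phi}$ and $W^\dagger\ket{\Phi^\perp}$, show that $-WRW^\dagger R$ acts as a rotation by $2\theta$ on $\mathrm{span}\{\ket{\Phi},\ket{\Phi^\perp}\}$, and iterate to accumulate the angle to $(2N+1)\theta$. One harmless remark: your signs $R\ket{\Phi}=-\ket{\Phi}$, $R\ket{\Psi^\perp}=+\ket{\Psi^\perp}$ are the ones consistent with the definition $R=I-2P$, whereas the paper lists the opposite signs; this discrepancy is immaterial because a global sign flip of $R$ cancels in the product $WRW^\dagger R$, so both computations land on the same rotation.
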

	\begin{proof}
		We define $S := -WRW^\dagger R$. Based on the previous discussion, we have:
		\begin{align}
			W\mathinner{|\Psi\rangle} &= \sin(\theta)\mathinner{|\Phi\rangle}+\cos(\theta)\mathinner{|\Phi^\perp\rangle}\\ 
			W\mathinner{|\Psi^\perp\rangle} &= \cos(\theta)\mathinner{|\Phi\rangle}-\sin(\theta)\mathinner{|\Phi^\perp\rangle}\\ 
			W^\dagger\mathinner{|\Phi\rangle} &= \sin(\theta)\mathinner{|\Psi\rangle}+\cos(\theta)\mathinner{|\Psi^\perp\rangle}\\ 
			W^\dagger\mathinner{|\Phi^\perp\rangle} &= \cos(\theta)\mathinner{|\Psi\rangle}-\sin(\theta)\mathinner{|\Psi^\perp\rangle}
		\end{align}
		Therefore
		\begin{gather}
			SW\mathinner{|\Psi\rangle} = \sin(\theta)S\mathinner{|\Phi\rangle}+\cos(\theta)S\mathinner{|\Phi^\perp\rangle}\label{eq6}
		\end{gather}
		
		According to \textbf{Lemma}~\ref{oaalemma1}, we can write down following relationships:
		\begin{align}
			R\ket{\Phi} &= \ket{\Phi}\\
			R\ket{\Phi^\perp} &= -\ket{\Phi^\perp}\\
			R\ket{\Psi} &= \ket{\Psi}\\
			R\ket{\Psi^\perp} &= -\ket{\Psi^\perp}
		\end{align}
		
		Therefore
		\begin{gather}
			\begin{split} 
				S\mathinner{|\Phi\rangle} 
				&= -W RW^\dagger R\mathinner{|\Phi\rangle}= -WRW^\dagger \mathinner{|\Phi\rangle}\\ 
				&= -WR\left(\sin(\theta)\mathinner{|\Psi\rangle}+\cos(\theta)\mathinner{|\Psi^\perp\rangle}\right)\\ 
				&= W\left(-\sin(\theta)\mathinner{|\Psi\rangle}+\cos(\theta)\mathinner{|\Psi^\perp\rangle}\right)\\ 
				&= (\cos^2(\theta)- \sin^2(\theta))\mathinner{|\Phi\rangle}-2\sin(\theta)\cos(\theta)\mathinner{|\Phi^\perp\rangle}\\ 
				&= \cos(2\theta)\mathinner{|\Phi\rangle}-\sin(2\theta)\mathinner{|\Phi^\perp\rangle}
			\end{split}
		\end{gather}
	
		\begin{gather}
			\begin{split} 
				S\mathinner{|\Phi^\perp\rangle} 
				&= -W RW^\dagger R\mathinner{|\Phi^\perp\rangle}= WRW^\dagger \mathinner{|\Phi^\perp\rangle}
				\\ &=WR \left(\cos(\theta)\mathinner{|\Psi\rangle}-\sin(\theta)\mathinner{|\Psi^\perp\rangle}\right)
				\\ &=W \left(\cos(\theta)\mathinner{|\Psi\rangle}+\sin(\theta)\mathinner{|\Psi^\perp\rangle}\right)
				\\ &=2\sin(\theta)\cos(\theta)\mathinner{|\Phi\rangle}+(\cos^2(\theta) - \sin^2(\theta))\mathinner{|\Phi^\perp\rangle}
				\\ &= \sin(2\theta)\mathinner{|\Phi\rangle} + \cos(2\theta)\mathinner{|\Psi\Phi^\perp\rangle} 
			\end{split}
		\end{gather}
		Take these back to formula~\ref{eq6}, we have
		\begin{gather}
			SW\mathinner{|\Psi\rangle} = \sin(3\theta)\mathinner{|\Phi\rangle}+\cos(3\theta)\mathinner{|\Phi^\perp\rangle}
		\end{gather}
		Similarly, we can further prove that
		\begin{gather}
			S^NW\mathinner{|\Psi\rangle} = \sin[(2N+1)\theta]\mathinner{|\Phi\rangle}+\cos[(2N+1)\theta]\mathinner{|\Phi^\perp\rangle}
		\end{gather}
	\end{proof}
}
\vspace{3mm}
\vspace{3mm}

\noindent{
\textbf{Error analysis of OAA.} OAA is originally used to amplify the LCU with unitary target operator. When the target operator implemented by LCU is unitary and the success probability of LCU is exactly $25\%$, OAA can amplify the success probability into $100\%$ without any additional error. However, this condition does not always hold.
}

We define $P := \ket{0}\bra{0}\otimes I$, $R := I-2P$ and denote the circuit of LCU as $W$, then we have 
\begin{gather}
    PW\ket{0}\ket{\psi} = s\ket{0}U\ket{\psi}
\end{gather}
Where $U$ is the target operator implemented by LCU, $s$ is the complex amplitude of target operator and success probability is $|s|^2$. We then denote the circuit of OAA as $A=WRW^\dagger R W$, therefore
\begin{gather}
    PA\ket{0}\ket{\psi} = (3PW - 4WPW^\dagger PW)\ket{0}\ket{\psi}
\end{gather}

Following the conclusion proposed by \textit{Berry et al.}\supercite{berry2015simulating}, we obtain:
\begin{gather}
    PA\ket{0}\ket{\psi} = (3sU-4s^3UU^\dagger U)\ket{\psi}
\end{gather}

When $\left\|U U^\dagger -I\right\|=\mathcal{O}(\delta)$ and $s = 0.5+\Delta$, we can bound the error of OAA:
\begin{gather}
    \left\|\frac{PA\ket{\psi}}{\left\|PA\ket{\psi}\right\|} - \frac{PW\ket{\psi}}{\left\|PW\ket{\psi}\right\|}\right\| = \left\|\frac{PA\ket{\psi}}{\left\|PA\ket{\psi}\right\|} - U\ket{\psi}\right\| \approx (\frac{1}{2}+3\Delta)\mathcal{O}(\delta)
\end{gather}

So when the success probability of LCU is close to $25\%$ (for example, probability deviation $\le 5\%$), it will not affect the error of OAA too much. If we follow the choice $L(q) = a\cdot 2^q$, then $\Delta\approx 8\times 10^{-3}$ and error $\approx 0.52\mathcal{O}(\delta)$. We need to emphasize here that $\mathcal{O}(\delta)$ represents the non-unitary effect of LCU, and the error of LCU is not necessarily equal to $\mathcal{O}(\delta)$.  However, we will prove in the next section that the error of LCU is larger than $\mathcal{O}(\delta)$ in unitary simulation task, which can help us bound the overall error of modified algorithm.

\vspace{3mm}
\noindent{\textbf{Error analysis of modified algorithm}}. There are two source of error in our modified algorithm: error introduced by OAA amplification and intrinsic error of non-unitary LCU.

Firstly, we focus on the amplification error of OAA. As we have discussed previously, under proper choice of linear combination coefficients in LCU, the amplification error is
\begin{gather}
	\left\|E_{oaa}\right\| \approx\frac{\mathcal{O}(\delta)}{2}
\end{gather}
where $\mathcal{O}(\delta) = \left\|UU^\dagger - I\right\|$, $U$ is the target operator of LCU. Notice that the goal of modified algorithm is to simulate a unitary operator $U_{exact}$, therefore we denote $U = U_{exact} + E_{lcu}$, ($E_{lcu}$ is not necessarily unitary), then we have
\begin{gather}
	\begin{split}
		\left\|UU^\dagger - I\right\|
		&= \left\|UU_{exact}^\dagger + E_{lcu}U^\dagger + UE_{lcu}^\dagger + E_{lcu}E_{lcu}^\dagger - I\right\|\\
		&= \left\|E_{lcu}U^\dagger + UE_{lcu}^\dagger + E_{lcu}E_{lcu}^\dagger\right\|\\
		&\le \left\|E_{lcu}U^\dagger\right\| + \left\|UE_{lcu}^\dagger\right\| + \left\|E_{lcu}E_{lcu}^\dagger \right\|\\
		&\le 2\left\| E_{lcu}\right\| + \left\| E_{lcu}\right\|^2\\
		&\approx 2\left\| E_{lcu}\right\|
	\end{split}
\end{gather}
Therefore
\begin{gather}
	\left\| E_{lcu}\right\| \ge\frac{\mathcal{O}(\delta)}{2}\approx \left\|E_{oaa}\right\|
\end{gather}
So the overall error of the modified algorithm $\left\|E_{total}\right\|\le\left\| E_{lcu}\right\| +\left\| E_{oaa}\right\| \le 2\left\|E_{lcu}\right\|$, therefore the error of the modified algorithm is bounded by error of LCU when combination coefficients are chosen properly.

Next we discuss the error of LCU. The error of multi-product is\supercite{chin2010multi}
\begin{gather}
	e^{-i\sum_{j=1}^m \mathcal{H}_jt} = \sum_{q=1}^kc_q\mathcal{S}_1^{L(q)}(t/L(q))+ (-1)^{k-1}E_{2k+1}t^{2k+1}\prod_{q=1}^{k }\frac{1}{L^2(q)}+\mathcal{O}(t^{2k+3})
\end{gather}
If we choose $L(q) = a\cdot 2^q$, then the error is
\begin{gather}
	\left\|E_{lcu}\right\| = \left\|E_{2k+1}\right\|\frac{t^{2k+1}}{a^{2k} 2^{(1+k)k}}
\end{gather}
And then then total error of modified algorithm is bounded as below:
\begin{gather}
	\left\| E_{total}\right\| \le \left\|E_{2k+1}\right\|\frac{t^{2k+1}}{a^{2k} 2^{k^2+k-1}}
\end{gather}

\vspace{3mm}
\noindent{
	\textbf{Comparison between multi-product algorithm and modified version.}}
 In this section, we will discuss the advantages and drawbacks of multi-product algorithm and its modified version. The advantage of modified version is its high success probability and precision. Here we will compare the two version in details. In original version~\supercite{childs2012hamiltonian} , $L(q)$ is chosen to be
	\begin{gather}
		L(q) = 
		\left
		\{
		\begin{array}{ll}
			q&  1\le q<k\\
			e^{\gamma k}& q = k 
		\end{array}
		\right.
	\end{gather}
	and the error of multi-product algorithm is
	\begin{gather}
		\left\| E_{2k+1}\right\| t^{2k+1}\frac{e^{-2\gamma k}}{[(k-1)!]^2}
	\end{gather}
	For modified version, we choose $L(q)$ the nearest integer to $e^{\gamma k}\cdot 2^{q-k} / 3$, so the modified version has the same circuit depth with original version (Trotter products in LCU are commutative). In this case, the error of modified version is
	\begin{gather}
		\left\| E_{2k+1}\right\| t^{2k+1}9^k 2^{k^2-k+1}e^{-2\gamma k^2}
	\end{gather}
	It's easy to prove that the second error is smaller when $k$ is large.
	
	In contrast, although the original version multi-product algorithm's precision is lower, its error grows much slower then modified version. We attribute this to Trotter formula's property. The error we calculate only works when the Taylor expansion's truncation error is negligible. When the truncation error can't be ignored, the error of modified version will grows dramatically, while the Trotter formula's error will grow much slower. This is because in original multi-product algorithm, the Trotter product with the largest iteration number contributes most of the simulation result (because $\sum_{q=1}^{k-1}|c_q|\ll |c_k|$), so it has similar property as Trotter formula.

\vspace{3mm}

\noindent {\textbf{Selection of $\bf m_i$ and $\bf m_i^\prime$.}
When the non-unitary effect of LCU is insignificant, $m_i$ and $m_i^\prime$ determine the outcome and success probability of LCU (in the following text, the success probability refers to the probability at this stage). Assuming that the expected operator of LCU is $\sum_{i=1}^k c_iA_i$, then $m_i$ and $m_i^\prime$ should satisfy
\begin{gather}
    \frac{m_im_i^\prime}{m_jm_j^\prime} = \frac{c_i}{c_j}\quad i,j=1,2,\cdots, k\label{lcu1}
\end{gather}

Meanwhile, $\{m_1, m_2, \cdots, m_k\}$ and $\{m_1^\prime, m_2^\prime, \cdots, m_k^\prime\}$ is the first column of $C$ and the first row of $C^\prime$, respectively. Since $C$ and $C^\prime$ are unitary operator, we have
\begin{gather}
    \sum_{i=1}^k |m_i|^2 = \sum_{i=1}^k|m_i^\prime|^2 = 1\label{lcu2}
\end{gather}

The success probability of LCU is
\begin{gather}
    P = |\frac{m_im_i^\prime}{c_i}|^2 = \frac{|m_i|^2|m_i^\prime|^2}{|c_i|^2}\quad\forall i\in \{1,2,\cdots,k\}
\end{gather}
therefore
\begin{gather}
    \frac{|c_i|^2}{|m_i|^2} = \frac{|m_i^\prime|^2}{P}\\
    \sum_{i=1}^k\frac{|c_i|^2}{|m_i|^2} = \sum_{i=1}^k\frac{|m_i^\prime|^2}{P} = \frac{1}{P}
\end{gather}
The last equality is because of formula~\ref{lcu2}. Since $\sum_{j=1}^k |m_j|^2=1$, we have
\begin{gather}
    \sum_{j=1}^k |m_j|^2\sum_{i=1}^k\frac{|c_i|^2}{|m_i|^2} = \frac{1}{P}
\end{gather}
Therefore
\begin{gather}
\begin{split}
    \frac{1}{P}
    &=\sum_{i=1}^k |c_i|^2 + \sum_{i,j=1,i\ne j}^{k}|c_i|^2\frac{|m_j|^2}{|m_i|^2} \\
    &\ge \sum_{i=1}^k |c_i|^2 + \sum_{i,j=1,i\ne j}^{k}|c_i| |c_j|\\
    &= (\sum_{i=1}^k |c_i|)^2
\end{split}
\end{gather}
The inequality is because of $|c_i|^2\frac{|m_j|^2}{|m_i|^2} + |c_j|^2\frac{|m_i|^2}{|m_j|^2} \ge 2|c_i| |c_j|$. Thus
\begin{gather}
    P\le \frac{1}{(\sum_{i=1}^k |c_i|)^2}
\end{gather}

We can achieve this limit with $m_i=m_i^\prime = \sqrt{c_i/\sum_{i=1}^{k} |c_i|}$, therefore the success probability under this choice is the largest.
}
\end{spacing}

\end{document}